\newenvironment{reminder}[1]{\medskip
\noindent {\bf Reminder of #1.\  }\em}{\smallskip}
\newtheorem{theorem}{Theorem}[section]
\newtheorem{proposition}{Proposition}[section]
\newtheorem{lemma}{Lemma}[section]
\newtheorem{problem}{Problem}[section]
\newenvironment{proofof}[1]{\medskip
\noindent {\bf Proof of #1.  }}{\hfill$\blacktriangleleft$
\medskip}
\def\eps{\varepsilon}
\def\poly{\text{poly}}
\def \Z {{\mathbb Z}}
\def \R {{\mathbb R}}
\def\polylog{\operatorname{polylog}}
\newcommand{\IGNORE}[1]{}
\begin{document}

\title{An Illuminating Algorithm for the Light Bulb Problem}

\date{}
\author{Josh Alman\footnote{MIT CSAIL. \texttt{jalman@mit.edu}. Much of this work was done while the author was working at IBM Research Almaden.}}

\maketitle

\begin{abstract}
The Light Bulb Problem is one of the most basic problems in data analysis. One is given as input $n$ vectors in $\{-1,1\}^d$, which are all independently and uniformly random, except for a planted pair of vectors with inner product at least $\rho \cdot d$ for some constant $\rho > 0$. The task is to find the planted pair.
The most straightforward algorithm leads to a runtime of $\Omega(n^2)$. Algorithms based on techniques like Locality-Sensitive Hashing achieve runtimes of $n^{2 - O(\rho)}$; as $\rho$ gets small, these approach quadratic.

Building on prior work, we give a new algorithm for this problem which runs in time $O(n^{1.582} + nd)$, \emph{regardless} of how small $\rho$ is. This matches the best known runtime due to Karppa et al. Our algorithm combines techniques from previous work on the Light Bulb Problem with the so-called `polynomial method in algorithm design,' and has a simpler analysis than previous work. Our algorithm is also easily derandomized, leading to a  deterministic algorithm for the Light Bulb Problem with the same runtime of $O(n^{1.582} + nd)$, improving previous results.
\end{abstract}

\thispagestyle{empty}
\newpage
\setcounter{page}{1}

\section{Introduction}

In this paper, we study the problem of finding correlated vectors. Finding correlations is one of the most basic problems in data analysis. In many experiments, one gathers data about a number of different variables, and then one would like to determine which variables are correlated. By forming the vector of data points for each variable, this amounts to finding which pairs of vectors are correlated.

The most basic formalization of this problem is the so-called Light Bulb Problem, introduced by L.~Valiant in 1988 \cite{lightbulb}:

\begin{problem}[Light Bulb Problem]
We are given as input a set $S$ of $n$ vectors from $\{-1,1\}^d$, which are all independently and uniformly random except for two planted vectors (the correlated pair) which have inner product at least $\rho \cdot d$ for some $0 < \rho \leq 1$. The goal is to find the correlated pair.
\end{problem}

The dimension $d$ of the vectors is called the \emph{sample complexity} of the problem, since, in our data analysis application, it corresponds to the number of data points which must be gathered about the variables in order to determine which are correlated. When $d$ is too small, then the problem is information-theoretically impossible. For instance, if $d < \log(n-2)$, then by the pigeonhole principle, two of the random vectors must be \emph{equal} to each other, and there is no way to distinguish them from the planted correlated pair we are trying to find. By standard concentration inequalities, there is a constant $c>1$ such that, whenever $d \geq c \log n$, the correlated pair is the closest pair of vectors with high probability. We would like to design algorithms for this $d = O(\log n)$ regime, so that we can find correlated pairs without increasing the sample complexity above the information-theoretic requirement. 

A na{\"i}ve approach to the Light Bulb Problem is to compute the inner product of each pair, which takes $\Omega(n^2)$ time. However, in many applications, $n$ is quite large, and quadratic time is infeasible. For one example, in genome-wide association studies, scientists have gathered data on millions of genetic markers, and determining which of these are correlated is key to understanding their interactions in different biological mechanisms \cite{genome1, genome2}.

It is not hard to see that the Light Bulb Problem is a special case of the \emph{$(1+\eps)-$approximate Hamming nearest neighbor problem}. Using Indyk and Motwani's famous Locality-Sensitive Hashing framework~\cite{lsh}, one can solve the Light Bulb Problem in time $n^{2 - O(\rho)}$. For constant $\rho>0$, this gives a truly subquadratic runtime, but the runtime become quadratic as $\rho \to 0$. This is undesirable, as in many data analysis applications, as well as in applications to other areas like learning theory, we would like to quickly detect weak correlations with small $\rho$. Later work \cite{lb1,lb2} improved the constants in the $O(\rho)$ term, but still had the same asymptotic dependence on $\rho$ in the exponent.

In a breakthrough result, G.~Valiant \cite{greg} gave an algorithm solving the Light Bulb Problem in time $O(n^{(5-\omega)/(4-\omega) + \eps} + nd) < O(n^{1.615} + nd)$, where $\omega < 2.373$ is the matrix multiplication constant, for \emph{any} constant $\rho > 0$, no matter how small. Thereafter, Karppa et al. \cite{rand} gave an improved algorithm with a runtime of $O(n^{2\omega/3 + \eps} + nd) < O(n^{1.582} + nd)$. Both of these algorithms work when the sample complexity $d$ matches, up to a constant, the information-theoretically necessary $d = \Theta(\log n)$.

In this paper, we give an algorithm with a simple analysis which matches the best known runtime and sample complexity.
\begin{theorem} \label{lightbulb}
For every $\eps, \rho > 0$, there is a $\kappa > 1$ such that the Light Bulb Problem for correlation $\rho$ can be solved in randomized time $O(n^{2\omega/3 + \eps})$ whenever $d = \kappa \log n$ with polynomially low error.
\end{theorem}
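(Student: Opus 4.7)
The plan is to reduce the Light Bulb Problem to a single call to rectangular matrix multiplication via a two-level random partition, using the polynomial method to amplify the planted correlation so that it survives averaging over many random pairs inside a bucket.

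First, I would randomly split the $n$ vectors into two halves $L,R$ of size $n/2$, then partition each half uniformly at random into $m=n^{2/3}$ buckets of size $g=n^{1/3}/2$. With constant probability the planted pair is split between $L$ and $R$ and lands in a unique bucket-pair $(L_{I^\star},R_{J^\star})$, a failure event we can amplify away with $O(\log n)$ repetitions. For each bucket I now want a ``signature'' vector whose inner product with another bucket's signature reveals whether the corresponding bucket-pair contains the planted pair. Following the polynomial method, choose a low-degree univariate polynomial $P$ of degree $s$ (for instance a scaled $P(x)=x^s$ or a Chebyshev-type polynomial) whose value at $\rho d$ is much larger than its typical value on the $O(\sqrt{d})$-sized inner products among random $\{-1,1\}^d$ vectors. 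Using the identity $\langle u,v\rangle^k=\langle u^{\otimes k},v^{\otimes k}\rangle$, the quantity $P(\langle u,v\rangle)$ can be written as $\langle\phi(u),\psi(v)\rangle$ for explicit feature maps $\phi,\psi\colon\{-1,1\}^d\to\R^D$ with $D\le\binom{d+s}{s}$; setting $\alpha_i=\sum_{u\in L_i}\phi(u)$ and $\beta_j=\sum_{v\in R_j}\psi(v)$ gives
\[
\langle\alpha_i,\beta_j\rangle \;=\; \sum_{u\in L_i,\,v\in R_j}P(\langle u,v\rangle).
\]

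I would then stack the $\alpha_i$ as rows of an $m\times D$ matrix $A$ and the $\beta_j$ as columns of a $D\times m$ matrix $B$, and compute $C=AB$ by fast rectangular matrix multiplication. With $d=\kappa\log n$ and $s$ chosen so that $D=n^{o(1)}$, this product fits in the square matrix-multiplication regime and runs in $O(n^{2\omega/3+\eps})$ time. Locating the unique entry of $C$ that exceeds, say, half of the planted contribution $P(\rho d)$ above the empirical mean identifies $(I^\star,J^\star)$, after which a brute-force search through the $g^2=O(n^{2/3})$ candidate pairs inside $L_{I^\star}\times R_{J^\star}$ recovers the planted pair within the same time budget.

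The main obstacle is jointly tuning $(s,\kappa,P)$ so that simultaneously (i) the planted signal $P(\rho d)$ dominates the noise fluctuations in $\langle\alpha_{I^\star},\beta_{J^\star}\rangle$, which by a Chernoff/Hoeffding bound over the $g^2$ contributing random pairs has scale roughly $g\sqrt{\Var[P(\langle u,v\rangle)]}$, and (ii) the feature dimension $D\le\binom{d+s}{s}$ remains $n^{o(1)}$ so that the matrix multiplication stays within the $n^{2\omega/3+\eps}$ budget. Taking $s=\Theta(\log n/\log\log n)$ keeps $D$ polylogarithmic in $n$, while taking $\kappa=\kappa(\rho,\eps)$ large makes $\rho d$ noticeably exceed the $O(\sqrt d)$ scale of typical random inner products; the precise calculation amounts to checking that $P(\rho d)/(g\sqrt{\Var[P(\langle u,v\rangle)]})$ is a non-trivial polynomial factor in $n$, which after a union bound over the $m^2=n^{4/3}$ bucket-pairs yields the polynomially small failure probability claimed in the theorem.
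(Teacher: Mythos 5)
Your high-level architecture---bucket the vectors, amplify inner products with a low-degree polynomial realized as an inner product of feature maps, detect the heavy bucket-pair via one matrix product, then brute-force inside it---is the same as the paper's. But your parameter choice is internally inconsistent, and the inconsistency hides the two steps that actually make the algorithm work. To detect the planted pair, $P(\rho d)$ must dominate the aggregate noise of the $g^2=n^{2/3}$ uncorrelated pairs in a bucket-pair; even under the most optimistic (pairwise-independent, mean-zero) accounting that noise has scale $g=n^{1/3}$ times the typical value of $P$ on an uncorrelated pair, so the amplification ratio between $P(\rho d)$ and $P$ on inputs of magnitude $O(\sqrt{d\log n})$ must be at least $n^{1/3}$. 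Since $\rho d$ exceeds $O(\sqrt{d\log n})$ only by a constant factor $k=\Theta(\rho\sqrt{\kappa})$ (that is the whole point of the $d=\Theta(\log n)$ regime), a degree-$s$ polynomial achieves ratio roughly $k^s$ (Chebyshev does not change this qualitatively when the gap is a constant factor), forcing $s=\Theta(\log n)$. With your $s=\Theta(\log n/\log\log n)$ the ratio is $n^{o(1)}$ and the signal is buried. A sanity check: if $D=n^{o(1)}$ really sufficed, the product of an $m\times D$ by a $D\times m$ matrix would cost $m^2D^{\omega-2}=n^{4/3+o(1)}$, far below $n^{2\omega/3}$, which should have been a warning. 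With the correct $s=\log_k(\Theta(n^{1/3}))$ one gets $D=\binom{d}{\le s}=O(n^{2/3+\eps})$ (after taking $k$, hence $\kappa$, large as a function of $\eps$; this is the paper's Lemma 3.1), the product is genuinely rectangular, and---crucially---forming the feature matrices naively costs $n\cdot D=n^{5/3+\eps}$, which blows the budget. The paper needs an extra idea here (the Karppa et al./Lovett trick): each bucket's feature vector is itself assembled from a second, smaller matrix multiplication, using the fact that every monomial of degree at most $s$ is a product of two monomials of degree at most $\lceil s/2\rceil$. Your proposal has no mechanism for this step.

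Separately, your noise bound is unsupported as stated. The $g^2$ terms $P(\langle u,v\rangle)$ within a bucket-pair share vectors, so they are dependent, and for $P(x)=x^s$ with $s$ even they are all nonnegative with large mean; ``Chernoff/Hoeffding over the $g^2$ contributing random pairs'' does not apply, and you would be pushed into exactly the dependent-sum tail analysis this paper is designed to avoid. The paper's fix is to attach an independent uniform sign $a^x\in\{-1,1\}$ to each vector and sum $a^xa^yP(\langle x,y\rangle)$: conditioned on the input, the products $a^xa^y$ are pairwise independent with mean zero, so variances add, Chebyshev gives concentration at scale $g\cdot\max|P|$ with probability $3/4$, and $O(\log n)$ independent repetitions drive the error polynomially low. (Dropping the signs and using the worst-case bound $g^2\max|P|$ is exactly what the paper's deterministic Theorem 1.3 does, at the cost of shrinking $g$ to $n^{1/5}$ and a worse exponent $4\omega/5$.)
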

By leveraging our simpler analysis, we also give algorithms for several natural extensions and generalizations of the Light Bulb Problem, which are sometimes much faster than the algorithms from previous work.

\subsection{Algorithm Overview}
Our algorithm combines techniques from past work on subquadratic algorithms for the Light Bulb Problem \cite{greg, rand, det} with techniques for batch nearest neighbor algorithms using the \emph{`polynomial method in algorithm design'} \cite{aw, acw}. Our algorithm begins in a way common to both methods: we partition $S$ into $m = n^{2/3}$ groups $S = S_1 \cup \cdots \cup S_m$. Our goal is then to simultaneously check, for each pair $(S_i, S_j)$ of groups, whether there is a correlated pair of vectors in $S_i \times S_j$. We will do this by, for each group $S_i$, constructing two vectors $A_i, B_i \in \R^{t}$, for $t \approx n^{2/3}$, such that the inner product $\langle A_i, B_j \rangle$ is large if and only if there is a correlated pair of vectors in $S_i \times S_j$. By using fast matrix multiplication, we can quickly compute all of these inner products and find the correlated pair.

We differ from past work in how the $A_i$ and $B_i$ vectors are constructed. In \cite{greg, rand}, a sophisticated random sampling technique is used, including an involved probabilistic analysis to keep $t$ low, and in \cite{det}, that technique is derandomized. We instead use the polynomial method: we first design a polynomial (see (\ref{defC}) below) which, it is not hard to show, has the desired properties. We then convert it into $A_i, B_i$ vectors by dividing it up into monomials. By designing a polynomial whose degree is not too high, we get that the resulting number of monomials, and hence $t$, is also not too high. Our proof of correctness is straightforward, and it almost entirely avoids arguments about tail distributions of sums of dependent variables, including a multitude of calculations and casework, which are prevalent in the past work.

That said, our algorithm can be seen as the `best of both worlds': past work on the polynomial method has focused on designing subquadratic time algorithms, but not on optimizing \emph{how} subquadratic the runtime is. We use ideas from past work on the Light Bulb Problem (our overall approach to the problem comes from \cite{greg}, and the last paragraph in the proof of Theorem~\ref{lightbulb} uses a clever trick of \cite{rand}) in order to optimize our runtime here.

\subsection{Deterministic Light Bulb Problem}

In some cases, one would like a deterministic algorithm which is guaranteed to find correlations in subquadratic time. Since our polynomial construction and evaluation process is entirely deterministic, we can get such an algorithm easily. However, as the inputs to the Light Bulb Problem come from a random distribution, we need to be careful about what a deterministic algorithm means in this setting. For instance, there is a small chance that a random pair of vectors will be just as correlated as the correlated pair, in which case a deterministic algorithm has no hope of finding the true correlated pair.

\vspace{2mm}
\noindent{\textbf{Almost All Instances}}

One option is to design an algorithm which correctly solves \emph{almost all instances}. This is the notion which was introduced and used in the past work by Karppa et al.~\cite{det} on deterministic algorithms for the Light Bulb Problem. We say that an algorithm is correct on almost all instances if the probability of drawing an instance where the algorithm fails is $1/\poly(n)$. For this notion, we match the runtime of the best randomized algorithm:

\begin{theorem} \label{lightbulbalmostall}
For every $\eps, \rho > 0$, there is a $\kappa > 0$ such that the Light Bulb Problem for correlation $\rho$ can be solved in deterministic time $O(n^{2\omega/3 + \eps})$ on almost all instances whenever $d = \kappa \log n$.
\end{theorem}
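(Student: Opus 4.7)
The plan is to observe that the algorithm behind Theorem~\ref{lightbulb} is already essentially deterministic: as noted in the overview, the polynomial-method construction of the $A_i$ and $B_j$ vectors, the fast matrix multiplication to compute all $\langle A_i, B_j\rangle$, and the final brute-force verification inside the heavy block pair are all deterministic once a partition of $S$ into $m = n^{2/3}$ blocks is chosen. First I would fix any canonical deterministic partition of the $n$ input positions into $m$ blocks of size $n^{1/3}$ (for example, by consecutive index), and then run the resulting deterministic pipeline. The runtime is $O(n^{2\omega/3+\eps} + nd)$ for the same reasons as in Theorem~\ref{lightbulb}, which is $O(n^{2\omega/3+\eps})$ once $d = \kappa \log n$.

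Correctness on almost all instances will be supplied by the randomness of the input rather than by randomness in the partition. Under the ``almost all instances'' distribution, the two planted positions are uniform in $\binom{[n]}{2}$ by the symmetry of the random input, so with respect to our fixed partition they lie in two distinct blocks $(S_{i^\star}, S_{j^\star})$ except with probability $1/m = n^{-2/3}$. The exceptional same-block event is not a true failure: in that case a direct brute-force sweep inside each block in total time $m \cdot (n^{1/3})^2 \cdot d = O(n^{4/3} \log n)$ finds the planted pair well within the target runtime, so one can just run this sweep as a safety net alongside the main algorithm. Conditioned on the distinct-block event and on the identities of the planted vectors, the remaining $n-2$ vectors are i.i.d.\ uniform on $\{-1,1\}^d$, which is exactly the distribution under which the analysis of Theorem~\ref{lightbulb} controls the spurious inner products $\langle A_i, B_j\rangle$ for non-planted block pairs $(i,j)\ne(i^\star, j^\star)$.

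Finally, a union bound over the at most $\binom{m}{2}\le n^{4/3}$ non-planted block pairs, combined with choosing $\kappa$ large enough so that each individual concentration bound fails with probability $n^{-c\kappa}$ for a suitable constant $c>0$, drives the total failure probability to $1/\poly(n)$, matching the ``almost all instances'' notion of Karppa et al.~\cite{det}. The main obstacle---and really the only point requiring care---is to confirm that no step of the randomized analysis of Theorem~\ref{lightbulb} secretly uses independence between the partition and the vector values (as opposed to mere independence among the random vectors themselves); once this is checked, the derandomization is immediate, justifying the author's remark that a deterministic algorithm in this setting can be obtained essentially for free.
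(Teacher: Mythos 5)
There is a genuine gap: you have misidentified where the randomness in the algorithm of Theorem~\ref{lightbulb} actually lives. The partition of $S$ into $m=n^{2/3}$ groups is already arbitrary/deterministic in that proof; the randomness that must be removed is the choice of the signs $a^x \in \{-1,1\}$, one per input vector, repeated $\Theta(\log n)$ times. These signs are not a cosmetic feature: the quantity $C_{i,j}$ sums $a^x a^y\, p(\langle x,y\rangle)$ over all $|S_i|\cdot|S_j| = n^{2/3}$ pairs, and since $p(z)=z^r$ the uncorrelated terms do not cancel on their own (for even $r$ they are all nonnegative), so without the signs the noise is of order $n^{2/3} v^r$, which swamps the signal $\tau n^{1/3} v^r$. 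The pairwise independence of the products $a^x a^y$ is exactly what drives the variance bound $n^{2/3}v^{2r}$ and hence the Chebyshev step. Your proposed pipeline with the signs silently dropped (or treated as already deterministic) is therefore not correct as stated; this is precisely why the genuinely sign-free deterministic algorithm of Theorem~\ref{lightbulbdet} must shrink the groups to size $n^{1/5}$ and pays a worse exponent $4\omega/5$.

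Your instinct to use the randomness of the input is the right one, but it must be aimed at generating the signs, not at randomizing the location of the planted pair relative to a fixed partition. The paper's proof does the following: observe that the correctness analysis only uses \emph{pairwise} independence of the $a^x$, so each repetition needs only an $O(\log n)$-bit seed, i.e.\ $O(\log^2 n)$ truly random bits overall; then set aside a subset $S'\subseteq S$ of $\Theta(\log n)$ input vectors, brute-force check in $O(n\log^2 n)$ time whether either planted vector lies in $S'$, and if not, use the $d\cdot|S'| = \Theta(\log^2 n)$ uniformly random bits of the vectors in $S'$ as the seed to run the Theorem~\ref{lightbulb} algorithm on $S\setminus S'$. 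Adding these two ingredients (pairwise-independence reduction of the seed length, and seed extraction from a verified-random subset of the input) would repair your argument; without them it does not go through.
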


Our runtime of $O(n^{2\omega/3 + \eps}) \leq O(n^{1.582})$ is faster than the runtime of Karppa et al.~\cite{det}, which is at best $O(n^{1.996})$. Our algorithm also uses more straightforward and elementary techniques. The original algorithm of Karppa et al. relies heavily on random sampling. In order to derandomize this, Karppa et al. use heavy-duty techniques including constructing `correlation amplifiers' using the explicit expander graphs of Reingold, Vadhan, and Wigderson \cite{expander}. We avoid any such complications, since our algorithm replaces random sampling with a deterministic polynomial construction. In fact, one can view our polynomials in (\ref{defC}), below, as a smaller, elementary construction of their notion of a correlation amplifier.

Our algorithm for Theorem \ref{lightbulbalmostall} uses two ideas to derandomize the algorithm for Theorem \ref{lightbulb} without changing the runtime. First, we use a standard technique in derandomization: by examining the proof of correctness of Theorem \ref{lightbulb}, we will see that the random bits it uses only need to be \emph{pairwise independent}, rather than fully independent, which means only $O(\polylog n)$ independent random bits are needed for the algorithm to succeed with high probability. Second, in the proof of Theorem \ref{lightbulbalmostall}, we use the fact that, with the exception of the correlated pair of vectors, the vectors in the input set $S$ of the Light Bulb Problem \emph{are} random vectors, and we can use them as the source of random bits we need. This technique of using the input as a source of randomness has been used in a number of past derandomization results; see eg. \cite{inputrand1, inputrand2}.

\vspace{2mm}
\noindent{\textbf{Promise that random vectors aren't too correlated}}

Although the algorithm of \cite{det} is presented as working on almost all instances, it implicitly works in a stronger regime. It solves a promise version of the Light Bulb Problem, which we introduce here, in which we are guaranteed that no pair of random vectors is too correlated:

\begin{problem}[Promise Light Bulb Problem with parameter $w$]
We are given as input a set $S$ of $n$ vectors from $\{-1,1\}^d$, where two of the vectors (the correlated pair) have inner product at least $\rho \cdot d$ for some $0 < \rho \leq 1$, and every other pair of vectors has inner product at most $w \sqrt{d \log n}$. The goal is to find the correlated pair.
\end{problem}

To emphasize: the inputs to the Promise Light Bulb Problem are not necessarily chosen randomly; they can be chosen adversarially as long as they satisfy the guarantee.

By a Chernoff bound, a random instance of the Light Bulb Problem will also satisfy this guarantee with probability $1 - 1/\poly(n)$, for a sufficiently large constant $w$. Hence, deterministically solving the Promise Light Bulb Problem is sufficient to solve the Light Bulb Problem on almost all instances, and this is the approach that \cite{det} takes. One benefit of the Promise Light Bulb Problem is that it doesn't let us use the `artificial' trick of using vectors from a randomly chosen input as the source of randomness. Without using that trick, we can nonetheless solve the Promise Light Bulb Problem deterministically, with running time $O(n^{4\omega/5 + \eps}) \leq O(n^{1.8983})$:

\begin{theorem} \label{lightbulbdet}
There is a constant $w > 0$ such that, for every $\eps, \rho > 0$, there is a $\kappa > 0$ such that the Promise Light Bulb Problem with parameter $w$ for correlation $\rho$ can be solved in deterministic time $O(n^{4\omega/5 + \eps})$ whenever $d = \kappa \log n$.
\end{theorem}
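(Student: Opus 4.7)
\medskip
\noindent\textbf{Proof plan.}
My plan is to mimic the three-step template of Theorem~\ref{lightbulb} with parameters rebalanced for the fully deterministic, adversarial-promise setting. First I would partition $S$ into $m$ equal-sized groups $S_1, \ldots, S_m$ of size $g := n/m$; then, for each $S_i$, build a pair of vectors $A_i, B_i \in \R^t$ whose entries are (signed, scaled) monomials in the coordinates of the vectors in $S_i$, arranged so that $\langle A_i, B_j\rangle = \sum_{x \in S_i,\, y \in S_j} C(x,y)$ for a polynomial $C$ designed so that $C(x,y)$ is much larger when $\langle x,y\rangle \geq \rho d$ than when $|\langle x,y\rangle| \leq w\sqrt{d\log n}$; finally, compute the full $m \times m$ matrix of inner products by rectangular fast matrix multiplication and brute-force the winning group-pair. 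Setting $m = n^{4/5}$, so $g = n^{1/5}$, yields matrix-multiplication cost $m^\omega = n^{4\omega/5}$, which is exactly the target.

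For $C$ I would use the polynomial in (\ref{defC}) with parameters tuned to the promise regime. The amplification requirement is that the ratio $C(\rho d) / \max_{|z|\leq w\sqrt{d\log n}} |C(z)|$ exceed $g^2 = n^{2/5}$, the number of pair contributions inside any single group-pair, so that the unique group-pair containing the planted pair strictly maximizes the $m^2$ computed inner products. With $d = \kappa\log n$ the two inner-product thresholds are $\rho\kappa\log n$ and $w\sqrt{\kappa}\log n$, whose ratio $(\rho/w)\sqrt{\kappa}$ can be made an arbitrarily large (though $n$-independent) constant by choosing $\kappa$ large as a function of $\eps, \rho, w$. Plugging this separation into (\ref{defC}) should then keep the degree $k$ of $C$, and hence the monomial count $t = O(d^k)$, below $n^\eps$, which is what is needed for the rectangular matrix multiplication to run in $O(m^\omega \cdot n^\eps)$.

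The main obstacle is the joint calibration of $\kappa$ and $k$: the promise supplies only a constant multiplicative gap in the inner products, so $C$ must exploit the coordinate structure of $\{-1,1\}^d$ rather than act as a one-variable polynomial in $\langle x,y\rangle$ alone; otherwise the required degree would blow up $t$ past the budget. The construction (\ref{defC}) is precisely what circumvents this, and the promise guarantees the worst-case, instance-independent gap that lets a single fixed deterministic polynomial succeed on every valid input, without random sampling, pairwise independence, or the `input-as-randomness' trick used in Theorem~\ref{lightbulbalmostall}. Once $C$ is fixed, the remaining pipeline is explicit and deterministic: forming the $A_i, B_i$ takes $O(m \cdot g \cdot t) = O(n \cdot n^\eps)$ time, the matrix multiplication takes $O(m^{\omega+o(1)}) = O(n^{4\omega/5 + \eps})$, identifying the winning $(i^*, j^*)$ is linear in $m^2$, and searching its $g^2 = n^{2/5}$ pairs costs $O(n^{2/5}\log n)$, yielding the claimed overall runtime.
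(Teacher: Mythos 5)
Your skeleton matches the paper's proof: partition into $m = n^{4/5}$ groups of size $g = n^{1/5}$, set all signs deterministically (the paper takes $a^x = 1$), require the amplified planted inner product to exceed the worst-case sum of the $g^2 = n^{2/5}$ uncorrelated contributions in a group pair, separate via the triangle inequality, and finish with a rectangular matrix product plus brute force. But there is a genuine quantitative gap in the middle: your claim that the monomial count $t$ can be kept below $n^{\eps}$ is false, and it cannot be repaired. The promise only gives a constant ratio $k = \Theta(\rho\sqrt{\kappa}/w)$ between the planted inner product and the bound $v$ on the others, so achieving amplification $k^r \geq 3n^{2/5}$ with the polynomial $(z_1+\cdots+z_d)^r$ forces $r = \log_k(3n^{2/5}) = \Theta(\log n / \log k)$, and then $t = \sum_{i\le r}\binom{d}{i} \approx (ed/r)^r = n^{4/5 + O(\log\log k/\log k)}$ (the computation of Lemma~\ref{boundtufirst} with $n^{2/3}$ replaced by $n^{2/5}$): the monomial count is essentially the \emph{square} of the amplification factor, which is exactly why the deterministic exponent is $4\omega/5$ rather than $2\omega/3$. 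The theorem still goes through because the product is then a rectangular $n^{4/5}\times n^{4/5+\eps}\times n^{4/5}$ multiplication costing $M(n^{4/5}, n^{4/5+\eps}) = O(n^{4\omega/5+\eps})$ by (\ref{MM}) --- but that is a different accounting from your ``$O(m^{\omega}\cdot n^{\eps})$ with inner dimension $n^{\eps}$.''

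This error propagates to your cost for constructing the $A_i, B_i$: with $t = n^{4/5+\eps}$ the naive construction takes $\tilde{O}(m\cdot g\cdot t) = \tilde{O}(n^{9/5+\eps})$, not $O(n^{1+\eps})$. That happens to be below $n^{4\omega/5}$ for the current $\omega \approx 2.373$, but it exceeds the claimed bound whenever $\omega < 2.25$, so to prove the theorem as stated (for whatever $\omega$ is) you need the Lovett/Karppa et al.\ trick the paper invokes: compute each $A_{i,s}$ as an entry of a per-group product $L^i\tilde{L^i}^T$ indexed by subsets of size at most $\lceil r/2\rceil$, of which there are $u = O(n^{2/5+\eps})$, at total cost $O(m\cdot\max(u,g)^{\omega}) = O(n^{(4+2\omega)/5+\eps})$, which is dominated by $O(n^{4\omega/5+\eps})$. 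One further small correction: the paper's polynomial is just $\langle x,y\rangle^r$, a univariate power of the inner product; what keeps $t$ polynomial is not a special multivariate design but multilinearization over $\{-1,1\}^d$, which caps the monomial count at $\binom{d}{\leq r}$.
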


While this algorithm is slower than our aforementioned algorithms, it is nonetheless still faster than the previous best deterministic runtime \cite{det} of $O(n^{1.996})$, and it follows without much more work from our deterministic polynomial construction.

\subsection{Generality of the Light Bulb Problem}

As the Light Bulb Problem is so basic, a number of other important problems can be reduced to it as well. Here we give a couple of examples from prior work.

\noindent{\textbf{Correlations on the Euclidean Sphere}}

In all the above, we have been discussing finding correlated vectors from the domain $\{-1,1\}^d$. What if we are more generally interested in finding correlated vectors from the $d$-dimensional Euclidean sphere\footnote{The $d$-dimensional Eucliedan sphere is the set of points $x \in \R^d$ such that $x_1^2 + \cdots + x_d^2 = 1$.}? There is a randomized hashing algorithm by Charikar \cite{rounding} that `rounds' the Euclidean sphere to $\{-1,1\}^d$ in such a way that all of our algorithms above will still work, with $\rho$ only decreasing by a constant factor. The hash function simply picks a uniformly random hyperplane through the origin, and outputs $1$ or $-1$ depending on which side of the hyperplane a point lies on. Since our runtime for the Light Bulb Problem in Theorem~\ref{lightbulb} does not change when $\rho$ changes by a constant factor, we can thus achieve the same guarantees for the Light Bulb Problem on the Euclidean sphere.

\vspace{2mm}
\noindent{\textbf{Learning Sparse Parities with Noise and More}}

L.~Valiant \cite{lightbulb} first introduced the Light Bulb Problem as a basic example of a correlated learning problem. More generally, the Light Bulb Problem can be seen as a special case of several different problems in learning theory, including learning sparse parities with noise, learning sparse juntas with or without noise, and learning sparse DNFs. Surprisingly, Feldman et al.~\cite{vitaly} showed that all these more general learning problems can be reduced to the Light Bulb Problem as well, and the fastest known algorithms for them come from applying this reduction followed by the best Light Bulb Problem algorithms. Hence, our algorithm gives a new, simpler algorithm matching the best known runtimes for these problems as well. We refer to \cite[Appendix A]{greg} for a more detailed discussion of these reductions.

\section{Preliminaries} \label{prelims}

We assume familiarity with basic facts about combinatorics and probability, and in particular, the union bound, Chernoff bound, and Chebyshev inequality.
For an integer $d \geq 0$, we write $[d] := \{ 1,2,\ldots,d\}$. For a vector $x \in \{-1,1\}^d$, we will write $x_i$ to denote the $i$th entry of $x$ for any $i \in [d]$, and $x_M := \prod_{i \in M} x_i$ for any $M \subseteq [d]$.

\vspace{2mm}
\noindent{\textbf{Polynomial Multilinearization}}

For a multivariate polynomial $p : \R^d \to \R$, its \emph{multilinearization} is the polynomial $\bar{p} : \R^d \to \R$ which one gets when one expands $p$ into a sum of monomials, and then for each monomial, and each variable in that monomial, one reduces the exponent of that variable mod 2 to either 0 or 1. For instance, if $p(x_1, x_2, x_3) = x_1 x_2^5 x_3^2 + 3x_2^2$, then $\bar{p}(x_1, x_2, x_3) = x_1 x_2 + 3$. Notice that $x_i^2 = 1$ whenever $x_i \in \{-1,1\}$, and so for any $p$, and any $x \in \{-1,1\}^d$, we always have that $p(x) = \bar{p}(x)$. The number of multilinear monomials on $d$ variables of degree exactly $r$ is $\binom{d}{r}$. Hence, if $p$ has degree $r$, then the number of monomials in $\bar{p}$ is at most $\sum_{i=0}^r \binom{d}{i}$.

 We will use the two bounds on binomial coefficients to bound the number of monomials in $\bar{p}$. First, if $0 \leq k_1 \leq k_2 \leq n/2$, then $\binom{n}{k_1} \leq \binom{n}{k_2}$. Second, for any $1 \leq k \leq n$, Stirling's approximation shows that
\begin{align} \label{binbound} \binom{n}{k} \leq \frac{n^k}{k!} \leq \left( \frac{e \cdot n}{k} \right)^k.\end{align}

\vspace{2mm}
\noindent{\textbf{Matrix Multiplication Notation}}

Let $M(a,b)$ denote the runtime to compute the product of an $a \times b$ matrix with a $b \times a$ matrix, whose entries are integers of magnitude at most $2^{\polylog(ab)}$. For instance, $M(n,n) \leq O(n^{\omega})$ where $\omega \leq 2.373$ \cite{virginia, legall} is the matrix multiplication exponent. Since a $n \times n^{1+\eps} \times n$ matrix multiplication can be decomposed into $n^{\eps}$ different $n \times n \times n$ multiplications, we see that for $\eps \geq 0$, \begin{align}\label{MM} M(n, n^{1+\eps}) \leq O(n^{\omega + \eps}). \end{align}
\newpage
\section{Algorithm for the Light Bulb Problem}

In this section, we give our algorithm for the Light Bulb Problem, proving our main result, Theorem~\ref{lightbulb}. 

\begin{reminder}{Theorem~\ref{lightbulb}}
For every $\eps, \rho > 0$, there is a $\kappa > 0$ such that the Light Bulb Problem for correlation $\rho$ can be solved in randomized time $O(n^{2\omega/3 + \eps})$ whenever $d = \kappa \log n$ with polynomially low error.
\end{reminder}

For two constants $\gamma, k > 0$ to be determined, we will pick $\kappa = \gamma k^2 / \rho^2$. Let $S \subseteq \{-1,1\}^d$ be the set of input vectors, and let $x', y' \in S$ denote the correlated pair which we are trying to find. For distinct $x,y \in S$ other than the correlated pair, the inner product $\langle x, y\rangle$ is a sum of $d$ uniform independent $\{-1,1\}$ values. Let $v := \gamma (k / \delta) \log n$. By a Chernoff bound, for large enough $\gamma$, we have $|\langle x,y \rangle| \leq v$ with probability at least $1-1/n^3$. Hence, by a union bound over all pairs of uncorrelated vectors, we have $|\langle x,y \rangle| \leq v$ for all such $x,y$ with probability at least $1-1/n$. We assume henceforth that this is the case. Meanwhile, $\langle x', y' \rangle \geq \rho d = kv$.


Arbitrarily partition $S$ into $m := n^{2/3}$ groups $S_1, \ldots, S_m$ of size $g := n/m = n^{1/3}$ each. We can compute the inner product between each pair of vectors which was assigned to the same group in time $O(m \cdot g^2 \cdot d) = \tilde{O}(n^{4/3})$, and if we find the correlated pair, we can return it and end the algorithm. Otherwise, we may assume the correlated vectors are in different groups, and we continue.

For each $x \in S$, our algorithm picks a value $a^{x} \in \{-1,1\}$ independently and uniformly at random. For a constant $\tau > 0$ to be determined, let $r = \lceil \log_k(\tau n^{1/3}) \rceil$, and define the polynomial $p : \R^d \to \R$ by $p(z_1, \ldots, z_d) = (z_1 + \cdots + z_d)^r$. Our goal is, for each $(i,j) \in [m]^2$, to compute the value $$C_{i,j} := \sum_{x \in S_i} \sum_{y \in S_j} a^{x} \cdot a^{y} \cdot p(x_1 y_1, \ldots, x_d y_d).$$

\paragraph{Solving the problem using $C_{i,j}$} Let us first explain why we are interested in computing $C_{i,j}$. Denote $p(x,y) := p(x_1 y_1, \ldots, x_d y_d)$. Intuitively, $p(x,y)$ is computing an \emph{amplification} of $\langle x,y \rangle$. $C_{i,j}$ is then summing these amplified inner products for all pairs $(x,y) \in S_i \times S_j$. We will pick our parameters so that the amplified inner product of the correlated pair is large enough to stand out from the sums of inner products of random pairs.

Let us be more precise. Recall that for uncorrelated $x,y$ we have $|\langle x,y \rangle| \leq v$, and hence $|p(x,y)| \leq v^r$. Similarly, we have $|p(x',y')| \geq (kv)^r \geq \tau n^{1/3} v^r$. For $x,y \in S$, define $a^{(x,y)} := a^x \cdot a^y$. Notice that, for $i \neq j$, $C_{i,j} = \sum_{x \in S_i, y \in S_j} a^{(x,y)} p(\langle x,y \rangle)$, where the $a^{(x,y)}$ are \emph{pairwise independent} random $\{-1,1\}$ values.

We will now analyze the random variable $C_{i,j}$ where we think of the vectors in $S$ as fixed, and only the values $a^x$ as random.

Consider first when the correlated pair are not in $S_i$ and $S_j$. Then, $C_{i,j}$ has mean $0$, and (since variance is additive for pairwise independent variables) $C_{i,j}$ has variance at most $|S_i| \cdot |S_j| \cdot \max_{x \in S_i, y \in S_j} |p(\langle x, y \rangle)|^2 \leq n^{2/3} \cdot v^{2r}$. For sufficiently large constant $\tau$, by the Chebyshev inequality, we have that $|C_{i,j}| \leq \tau n^{1/3} v^r / 3$ with probability at least $3/4$. Let $\theta = \tau n^{1/3} v^r / 3$, so $|C_{i,j}| \leq \theta$ with probability at least $3/4$.

Meanwhile, if $x' \in S_i$ and $y' \in S_j$, then $C_{i,j}$ is the sum of $a^{(x',y')} p(\langle x',y' \rangle)$ and a variable $C'$ distributed as $C_{i,j}$ was in the previous paragraph. Hence, since $|p(\langle x',y' \rangle)| \geq \tau n^{1/3} v^r = 3 \theta$, and $|C'| \leq \theta$ with probability at least $3/4$, we get by the triangle inequality that $|C_{i,j}| \geq 2 \theta$ with probability at least $3/4$.

Hence, if we repeat the process of selecting the $a^{x}$ values for each $x \in S$ independently at random $O(\log n)$ times, whichever pair $S_i, S_j$ has $|C_{i,j}| \geq 2 \theta$ most frequently will be the pair containing the correlated pair with polynomially low error, and then a brute force within this set of $O(n^{1/3})$ vectors can find the correlated pair in $\tilde{O}(n^{2/3})$ time. In all, by a union bound over all possible errors, this will succeed with polynomially low error.

\paragraph{Computing $C_{i,j}$} It remains to give the algorithm to compute $C_{i,j}$. Before doing this, we will rearrange the expression for $C_{i,j}$ into one which is easier to compute. Since we are only interested in the values of $p$ when its inputs are all in $\{-1,1\}$, we can replace $p$ with its multilinearization $\hat{p}$. Let $M_1, \ldots, M_t$ be an enumeration of all subsets of $[d]$ of size at most $r$, so $t = \sum_{i=0}^r \binom{d}{i}$. Then, there are coefficients $c_1, \ldots, c_t \in \Z$ such that $\hat{p}(x) = \sum_{s=1}^t c_s x_{M_s}$. Rearranging the order of summation, we see that we are trying to compute
\begin{align} \label{defC} C_{i,j} = \sum_{s=1}^t \sum_{x \in S_i} \sum_{y \in S_j} a^{x} \cdot a^{y} \cdot c_s \cdot x_{M_s} \cdot y_{M_s} = \sum_{s=1}^t \left[ c_s \cdot \left( \sum_{x \in S_i} a^{x}  \cdot  x_{M_s} \right) \cdot \left( \sum_{y \in S_j} a^{y}  \cdot  y_{M_s} \right) \right] .\end{align}
In order to compute $C_{i,j}$, we first need to compute the coefficients $c_s$. Notice that $c_s$ depends only on $|M_s|$ and $r$. We can thus derive a simple combinatorial expression for $c_s$, and hence compute all of the $c_s$ coefficients in $\poly(r) = \polylog(n)$ time. Alternatively, by starting with the polynomial $(z_1 + \cdots + z_d)$ and then repeatedly squaring then multilinearizing, we can easily compute all the coefficients in $O(t^2 \polylog(n))$ time; this slower approach is still fast enough for our purposes.

Define the matrices $A, B \in \Z^{m \times t}$ by $A_{i,s} = \sum_{x \in S_i} a^{x}  \cdot  x_{M_s} $ and $B_{i,s} = c_s \cdot A_{i,s}$. Notice from (\ref{defC}) that the matrix product $C := A B^T$ is exactly the matrix of the values $C_{i,j}$ we desire. A simple calculation (see Lemma~\ref{boundtufirst} below) shows that for any $\eps>0$, we can pick a sufficiently big constant $k>0$ such that $t = O(n^{2/3 + \eps})$. Since $m = O(n^{2/3})$, if we have the matrices $A,B$, then we can compute this matrix product in $M(n^{2/3}, n^{2/3 + \eps}) = O(n^{2\omega/3+ \eps})$ time, completing the algorithm.

Unfortunately, computing the entries of $A$ and $B$ naively would take $\Omega(m \cdot t \cdot g) = \Omega(n^{5/3})$ time, which is slower than we would like. We will instead use a clever trick due to Lovett~\cite{originaltrick}, which was first applied in this context by Karppa et al.~\cite{rand}: we will compute those entries using \emph{another} matrix multiplication. Let $N_1, \ldots, N_u$ be an enumeration of all subsets of $[d]$ of size at most $\lceil r/2 \rceil$. For each $i \in [m]$, define the matrices $L^i, \tilde{L^i} \in \Z^{u \times g}$ (whose columns are indexed by elements $x \in S_i$) by $L^i_{s,x} = x_{N_s}$ and $\tilde{L^i}_{s,x} = a^{x} \cdot x_{N_s}$. Then, compute the product $P^i := L^i \tilde{L^{i}}^T$. We can see that $P^i_{s,s'} = \sum_{x \in S_i} a^{x}  \cdot  x_{N_s \oplus N_{s'}}$, where $N_s \oplus N_{s'}$ is the symmetric difference of $N_s$ and $N_{s'}$. Since any set of size at most $r$ can be written as the symmetric difference of two sets of size at most $\lceil r/2 \rceil$, each desired entry $A_{i,s}$ can be found as an entry of the computed matrix $P^i$. Similar to our bound on $t$ from before (see Lemma~\ref{boundtufirst} below), we see that for big enough constant $k$, we have $u = O(n^{1/3 + \eps})$. Computing the entries of the $L^i$ matrices naively takes only $O(m \cdot u \cdot g \cdot r) = \tilde{O}(n \cdot u)= \tilde{O}(n^{4/3 + \eps})$ time, and then computing the products $P^i$ takes $O(m \cdot \max(u,g)^\omega) = O(n^{(2+ \omega)/3 + \eps})$ time; both of these are dominated by $O(n^{2\omega/3+ \eps})$. This completes the algorithm! Finally, we perform the computations mentioned above:

\begin{lemma} \label{boundtufirst}
For every $\eps > 0$, there is a $k > 0$ such that (with the same notation as in the proof of Theorem \ref{lightbulb} above) we can bound $t = O(n^{2/3 + \eps})$, and $u = O(n^{1/3 + \eps})$.
\end{lemma}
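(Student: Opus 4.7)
The plan is to unwind the definitions and reduce both $t$ and $u$ to a single binomial coefficient, then apply the bound $\binom{d}{j} \leq (ed/j)^{j}$ from \eqref{binbound} and check that the resulting exponent of $n$ can be made arbitrarily close to $2/3$ and $1/3$ respectively by taking $k$ large. First I would recall from the preliminaries that $t = \sum_{i=0}^{r} \binom{d}{i}$ and $u = \sum_{i=0}^{\lceil r/2 \rceil} \binom{d}{i}$, with the parameters fixed to $d = (\gamma k^2/\rho^2) \log n$ and $r = \lceil \log_k(\tau n^{1/3}) \rceil$. The first step is to observe that, for $k$ large enough (depending on $\gamma,\rho,\tau$), the value $r$ is much smaller than $d/2$, so the binomial coefficients $\binom{d}{i}$ in the sum defining $t$ are monotonically increasing in $i$. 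Hence $t \leq (r+1)\binom{d}{r}$ and, analogously, $u \leq (\lceil r/2\rceil + 1)\binom{d}{\lceil r/2\rceil}$; the prefactors are only $\polylog(n)$ and so get absorbed into any $n^{\eps}$.

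Next I would plug in the numerical bound $\binom{d}{r} \leq (ed/r)^{r}$. Using $r = \Theta(\log n/\log k)$ and $d = \Theta(\log n)$, the ratio $ed/r$ is a constant depending only on $\gamma, \rho, k$: specifically, $ed/r$ is on the order of $k^2 \log k$ (up to constants in $\gamma,\rho,\tau$). Therefore
\[
\binom{d}{r} \leq \left(\frac{e d}{r}\right)^{r} = \exp\!\left( r \cdot \log(ed/r) \right) = n^{\,r \log(ed/r)/\log n}.
\]
Since $r/\log n \to 1/(3\log k)$ as $n\to\infty$, the exponent on $n$ tends to $\log(ed/r)/(3 \log k) = \log_k(ed/r)/3$. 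The crucial observation is that $\log_k(ed/r) = 2 + \log_k(\text{const}\cdot \log k)$, which converges to $2$ as $k \to \infty$. Hence the exponent tends to $2/3$, and for $k$ chosen sufficiently large (depending on $\eps$) we get $t \leq n^{2/3 + \eps}$.

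The argument for $u$ is identical with $r$ replaced by $\lceil r/2 \rceil$: the ratio $ed/\lceil r/2\rceil$ is on the order of $k^2 \log k$ as well, while the exponent $\lceil r/2\rceil/\log n$ converges to $1/(6\log k)$. Thus the exponent of $n$ tends to $2/(6) = 1/3$, and for large enough $k$ we obtain $u \leq n^{1/3 + \eps}$. Both conclusions follow by choosing $k$ large enough that both residual $\log_k(\text{const}\cdot \log k)$ terms are below $3\eps$.

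The only mildly delicate point — and the place where one must be careful — is verifying that $r \leq d/2$ so that monotonicity of $\binom{d}{i}$ can be invoked; this is the step where $\kappa = \gamma k^2/\rho^2$ scales with $k^2$ while $r$ only grows like $\log n/\log k$, so the inequality $r \ll d/2$ holds with a lot of room for every sufficiently large $k$. Beyond that, the proof is just a routine calculation of the exponent of $n$ using Stirling's estimate.
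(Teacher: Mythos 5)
Your proof is correct and follows essentially the same route as the paper's: bound the sum by $(r+1)\binom{d}{r}$ using monotonicity of binomial coefficients below $d/2$, apply the estimate $\binom{d}{r} \leq (ed/r)^r$ from (\ref{binbound}), observe that $ed/r = \Theta(k^2\log k)$, and note that the resulting exponent $\log_k(ed/r)/3 \to 2/3$ as $k \to \infty$ (and likewise $1/3$ for $u$). Your explicit attention to the condition $r < d/2$ is a point the paper also makes, so there is nothing missing here.
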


\begin{proof}
Recall that $d = O(k^2 \log(n))$, and $r = \log_k(O(n^{1/3}))$. Hence, by the bound (\ref{binbound}),
$$t \leq (r+1) \cdot \binom{d}{r} \leq (r+1) \cdot (ed/r)^r \leq O(k^2 \log(k))^{\log_k(O(n^{1/3}))} = n^{2/3 + O(\log\log(k)/\log(k))}.$$
For any $\eps>0$ we can thus pick a sufficiently large $k$ so that $t \leq O(n^{2/3 + \eps})$. We can similarly bound $\binom{d}{r/2} \leq O(n^{1/3 + \eps})$ which implies our desired bound on $u$.
\end{proof}

\section{Deterministic Algorithms}

We now present our two deterministic algorithms for the Light Bulb Problem. Each is a slight variation on the algorithm from the previous section.

\begin{reminder}{Theorem~\ref{lightbulbalmostall}}
For every $\eps, \rho > 0$, there is a $\kappa > 0$ such that the Light Bulb Problem for correlation $\rho$ can be solved in deterministic time $O(n^{2\omega/3 + \eps})$ on almost all instances whenever $d = \kappa \log n$.
\end{reminder}

\begin{proof}
The only randomness used by our algorithm for Theorem~\ref{lightbulb} was our choice of an independently and uniformly random $a^x \in \{-1,1\}$ for each $x \in S$. Since this requires $\Theta(n)$ random bits, and we repeat the entire algorithm $\Theta(\log n)$ times to get our desired correctness guarantee, the total number of random bits used is $\Theta(n \log n)$.

However, the only property of the $a^x$ variables which we use in the proof of correctness is that they are \emph{pairwise}-independent. By standard constructions\footnote{For one example, to generate $2^\ell - 1$ pairwise-independent bits, pick only $\ell$ bits $b_1, \ldots, b_\ell \in \{-1,1\}$ independently and uniformly at random, and then output, for each $I \subseteq [\ell]$, the product $\prod_{i \in I} b_i$.}, only $O(\log n)$ independent random bits are needed to generate $n$ pairwise-independent random bits. Thus, our entire algorithm actually only needs $O(\log^2 n)$ independent random bits.

Our entirely deterministic algorithm then proceeds as follows. Pick the same $\kappa$ as in Theorem~\ref{lightbulb}. Let $S \subseteq \{-1,1\}^{d}$ be the input vectors. Arbitrarily pick a subset $S' \subseteq S$ of $|S'| = \Theta(\log n)$ of the input vectors, and let $R = S \setminus S'$ be the remaining vectors. 

We begin by testing via brute-force whether either vector of the correlated pair is in $S'$. This can be done in $O(|S'| \cdot |S| \cdot d) = O(n \log^2(n))$ time. If we find the correlated pair (a pair with inner product at least $\rho \cdot d$), then we output it, and otherwise, we can assume that the vectors in $S'$ are all uniformly random vectors from $\{-1,1\}^d$. In other words, we can use them as $d \cdot |S'| = \Theta(\log^2 n)$ independent uniformly random bits. We thus use them as the required randomness to run the algorithm from Theorem~\ref{lightbulb} on input vectors $R$. That algorithm has polynomially low error, which implies the desired correctness guarantee.
\end{proof}

\begin{reminder}{Theorem \ref{lightbulbdet}}
There is a constant $w > 0$ such that, for every $\eps, \rho > 0$, there is a $\kappa > 0$ such that the Promise Light Bulb Problem with parameter $w$ for correlation $\rho$ can be solved in deterministic time $O(n^{4\omega/5 + \eps})$ whenever $d = \kappa \log n$.
\end{reminder}

\begin{proof}
The guarantee of the Promise Light Bulb Problem is that, when we pick a sufficiently large $w$, the uncorrelated vectors have as small inner product as we assumed they did in the first paragraph in the proof of Theorem \ref{lightbulb}. In other words, there is a quantity $v$ such that $|\langle x,y \rangle| \leq v$ for all $x,y \in S$ other than the correlated pair, and moreover, $\langle x', y' \rangle \geq kv$ for a constant $k>0$ with $k \to \infty$ as $w \to \infty$.

The algorithm is then almost identical to Theorem \ref{lightbulb}, except we need to remove the only use of randomness: the randomness used to pick the $a^x$ values. To do this, we will simply pick $a^x = 1$ for all $x$.

In order to guarantee the correctness of our algorithm, we must now change the parameters slightly. Instead of partitioning the input into $m = n^{2/3}$ groups of size $g = n^{1/3}$, we will instead partition into $m = n^{4/5}$ groups of size $g = n^{1/5}$. Similarly, instead of picking $r$ (the exponent in the polynomial $p$) to be $\log_k(O(n^{1/3}))$, we will pick $r = \log_k(3 n^{2/5})$, so that $p(x',y') \geq (kv)^r = 3 n^{2/5} v^r$.

With these choices, for any $i$ and $j$ such that the correlated pair are not in $S_i$ and $S_j$, we have $|C_{i,j}| \leq |S_i| \cdot |S_j| \cdot n^{2/5} = n^{2/5} v$, whereas if $x' \in S_i$ and $y' \in S_j$ then by the triangle inequality, $|C_{i,j}| \geq p(x',y') -|S_i| \cdot |S_j| \cdot n^{2/5} \geq 2 n^{2/5} v^r$. Hence, the correlated pair must be in whichever $S_i$ and $S_j$ with $i \neq j$ has the largest $|C_{i,j}|$.

The algorithm to compute the $C_{i,j}$ values is identical to that of Theorem \ref{lightbulb}. We now get that $t = \sum_{i=0}^{r} \binom{d}{i} \leq O(n^{4/5 + \eps})$ and similarly, $u \leq O(n^{2/5 + \eps})$, which leads to a final runtime of $O(n^{4 \omega / 5 + \eps})$, as desired.
\end{proof}

\section{Conclusion}

\noindent{\textbf{Faster Algorithms?}}

In this paper, we give an algorithm for the Light Bulb Problem and some variants. A natural question remains: can one improve the $O(n^{2\omega/3})$ runtime? It seems like substantially new techniques might be necessary. We currently reduce the problem to a $n^{2/3} \times n^{2/3} \times n^{2/3}$ matrix multiplication; with a further reduction in the dimensions, even using the cubic matrix multiplication algorithm would give a subquadratic algorithm for the Light Bulb Problem. This would be surprising, since recent progress on the problem has relied heavily on fast matrix multiplication.

It should nonetheless be noted that, despite using fast matrix multiplication, the algorithms in this paper can be quite practical. For instance, using Strassen's original algorithm \cite{strassen}, which is frequently used in practice, gives $\omega \approx 2.81$, and hence a subquadratic runtime for the Light Bulb Problem of about $O(n^{2\omega/3}) \leq O(n^{1.874})$.

\noindent{\textbf{Finding General Correlations}}

\noindent{Past work on the Light Bulb Problem has also approached a more general problem of finding correlations:}

\begin{problem}[Finding Correlations]\label{findcorr}We are given as input two sets $X,Y \subseteq \{-1,1\}^d$ of $n$ vectors each, with the promise that for at most $q$ pairs of $x,y \in X \times Y$, we have $|\langle x,y \rangle| \geq \rho d$ ($x$ and $y$ are \emph{correlated}), and for all other pairs of $x,y \in X \times Y$, we have $|\langle x,y \rangle| \leq \tau d$ ($x$ and $y$ are \emph{uncorrelated}) for some constants $0 < \tau < \rho \leq 1$. Our goal is to find all $q$ of the correlated pairs of vectors.
\end{problem}

Again, as $\rho \to 0$, hashing techniques give runtimes which approach quadratic. However, if $\tau$ is also comparatively small (say, there is a constant $\sigma > 1$ such that $\rho / \tau \geq \sigma$), we might hope to achieve a truly subquadratic runtime, no matter how small $\rho$ becomes. With only a slight modification of our algorithm for Theorem~\ref{lightbulbdet}, we can achieve this:

\begin{proposition} \label{subquad}
For all constants $\eta, c > 0$, and $\sigma > 1$, there exists a constant $\eps > 0$ such that Finding Correlations can be solved in $O(n^{2-\eps})$ deterministic time when $\rho/\tau \geq \sigma$, $q \leq n^{2-\eta}$ and $d = c \log(n)$.
\end{proposition}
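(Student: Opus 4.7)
The plan is to modify the deterministic algorithm of Theorem~\ref{lightbulbdet} to handle a bipartite input with potentially many planted pairs. Partition $X$ into $m = n^{1-\gamma}$ groups $X_1, \ldots, X_m$ of size $g = n^{\gamma}$ each, and partition $Y$ identically; here $\gamma > 0$ is a small constant to be chosen in terms of $\eta$, $c$, and $\sigma$ at the end. As in the proof of Theorem~\ref{lightbulbdet}, set $a^x = 1$ for every $x$ (no randomness is used), and use $p(z_1, \ldots, z_d) = (z_1 + \cdots + z_d)^r$ with $r$ an \emph{even} integer, so that $p(x_1 y_1, \ldots, x_d y_d) = \langle x, y \rangle^r \geq 0$ for every pair. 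Then compute $C_{i,j} := \sum_{x \in X_i} \sum_{y \in Y_j} \langle x, y \rangle^r$ for each $(i,j) \in [m]^2$ using exactly the multilinearization-plus-matrix-multiplication strategy (including Lovett's trick for computing the intermediate matrices) from the proof of Theorem~\ref{lightbulb}.

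The key separation claim is as follows. Choose $r$ to be the smallest even integer with $\sigma^r \geq 3 g^2$, so $r = \Theta(\gamma \log n / \log \sigma)$. If no correlated pair lies in $X_i \times Y_j$, then every term has $\langle x, y \rangle^r \leq (\tau d)^r$, giving $C_{i,j} \leq g^2 (\tau d)^r$. If instead the bucket contains a correlated pair $(x', y')$, then because every term is non-negative,
\[
C_{i,j} \;\geq\; \langle x', y' \rangle^r \;\geq\; (\rho d)^r \;=\; \sigma^r (\tau d)^r \;\geq\; 3 g^2 (\tau d)^r.
\]
Hence thresholding at $2 g^2 (\tau d)^r$ (a value we compute explicitly) exactly flags the ``hot'' buckets, i.e., those containing at least one correlated pair. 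We then brute-force the $g^2$ candidate pairs in each hot bucket, which recovers all $q$ correlated pairs in $O(h \cdot g^2 \cdot d)$ total time, where $h \leq q$ since each correlated pair lies in exactly one bucket.

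What remains is to balance the cost of computing all the $C_{i,j}$'s against the brute-force cost, and to choose $\gamma$. Exactly as in Lemma~\ref{boundtufirst}, the number of monomials after multilinearization satisfies $t \leq (r+1)\binom{d}{r} = n^{O(\gamma \log(1/\gamma))}$, where the hidden constant depends only on $c$ and $\sigma$. In particular, by taking $\gamma$ sufficiently small (depending only on $c, \sigma$), we can force $t \leq n^{\alpha}$ for any prescribed positive constant $\alpha$. The brute-force cost is $O(q g^2 d) = n^{2 - \eta + 2\gamma + o(1)}$, so we want $\gamma < \eta/2$. The main obstacle is the cost of multiplying the very tall-and-skinny $m \times t$ matrix by a $t \times m$ matrix: the simple bound $M(m, t) \leq m^\omega$ does not beat $n^2$ once $\gamma$ is small. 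To handle arbitrarily small $\eta$, I would invoke a fast rectangular matrix multiplication bound of the form $M(n^{1-\gamma}, n^{\alpha}) \leq n^{2(1-\gamma) + o(1)}$, which is valid whenever $\alpha$ is below the ``dual exponent'' of matrix multiplication (a known positive universal constant, currently at least $0.31$). With this bound, the matrix multiplication step runs in $n^{2 - 2\gamma + o(1)}$ time, and choosing $\gamma = \eta/4$ (after first shrinking $\gamma$ enough for the dual-exponent bound to apply) gives a total runtime of $O(n^{2 - \eta/2 + o(1)})$, so any constant $\eps < \eta/2$ works.
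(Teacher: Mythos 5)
Your proposal is correct, and it is worth noting that the paper deliberately omits the proof of Proposition~\ref{subquad}, saying only that the algorithm is ``almost identical'' to that of Theorem~\ref{lightbulbdet}; your write-up supplies two ingredients that this remark glosses over and that are genuinely needed. First, taking $r$ \emph{even} so that every term $\langle x,y\rangle^r$ is non-negative is not cosmetic: in Finding Correlations a bucket may contain several correlated pairs, possibly with negative inner products, so the triangle-inequality argument used in Theorem~\ref{lightbulbdet} (which isolates a single correlated term) could in principle suffer cancellation, whereas your monotonicity argument is immune to it. Second, you correctly identify that the matrix multiplication step cannot be handled by the paper's bound (\ref{MM}) alone: since $\sigma$ and $c$ are fixed (unlike in Theorems~\ref{lightbulb} and \ref{lightbulbdet}, where the amplification ratio $k$ and the constant $\kappa$ in $d=\kappa\log n$ can be chosen arbitrarily large), one is forced to take $\gamma<\eta/2$, and for small $\eta$ the resulting product of an $n^{1-\gamma}\times t$ by a $t\times n^{1-\gamma}$ matrix is superquadratic under both the naive bound $O(m^2 t^{\omega-2})$ and the square bound $O(m^\omega)$. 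Invoking fast rectangular matrix multiplication below the dual exponent is a clean fix, and your bound $t=n^{O(\gamma\log(1/\gamma))}$ guarantees its hypothesis is met once $\gamma$ is small enough as a function of $c$ and $\sigma$. The only (immaterial) slip is in the last sentence: when the dual-exponent constraint forces $\gamma=\gamma_0<\eta/4$, the matrix multiplication term is $n^{2-2\gamma_0+o(1)}$ rather than $n^{2-\eta/2+o(1)}$, so the correct conclusion is a runtime of $n^{2-\min(2\gamma_0,\,\eta/2)+o(1)}$, which still yields a valid constant $\eps>0$.
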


Proposition \ref{subquad} is somewhat weaker than the results from past work \cite{greg, det, rand}, which only require that $\log(1/\tau)/\log(1/\rho)$ be bounded below by a constant. However, our algorithm benefits from the same simplicity as our Light Bulb Problem algorithms, and it is also deterministic (in the usual sense -- there is no distribution on inputs the Finding Correlations problem). Like before, only \cite{det} gives a deterministic algorithm, and it involves the same aforementioned heavy-duty techniques which we avoid. We omit the details of this algorithm here for clarity of exposition, as the algorithm is almost identical to that of Theorem~\ref{lightbulbdet}.

\vspace{2mm}
\noindent{\textbf{Acknowledgments}}
The author would like to thank Vitaly Feldman, Michael P. Kim, Virginia Vassilevska Williams, Ryan Williams, and anonymous reviewers for their comments on an earlier draft.

\bibliographystyle{alpha}
\bibliography{papers}

\newcommand{\etalchar}[1]{$^{#1}$}
\begin{thebibliography}{KKK{\'O}C16}

\bibitem[ACW16]{acw}
Josh Alman, Timothy~M Chan, and Ryan Williams.
\newblock Polynomial representations of threshold functions and algorithmic
  applications.
\newblock In {\em FOCS}, 2016.

\bibitem[AW15]{aw}
Josh Alman and Ryan Williams.
\newblock Probabilistic polynomials and hamming nearest neighbors.
\newblock In {\em FOCS}, 2015.

\bibitem[Cha02]{rounding}
Moses~S Charikar.
\newblock Similarity estimation techniques from rounding algorithms.
\newblock In {\em STOC}, 2002.

\bibitem[Dub10]{lb2}
Moshe Dubiner.
\newblock Bucketing coding and information theory for the statistical
  high-dimensional nearest-neighbor problem.
\newblock {\em IEEE Transactions on Information Theory}, 56(8):4166--4179,
  2010.

\bibitem[FGKP09]{vitaly}
Vitaly Feldman, Parikshit Gopalan, Subhash Khot, and Ashok~Kumar Ponnuswami.
\newblock On agnostic learning of parities, monomials, and halfspaces.
\newblock {\em SIAM Journal on Computing}, 39(2):606--645, 2009.

\bibitem[GW02]{inputrand1}
Oded Goldreich and Avi Wigderson.
\newblock Derandomization that is rarely wrong from short advice that is
  typically good.
\newblock {\em Lecture notes in computer science}, 2483:209--223, 2002.

\bibitem[IM98]{lsh}
Piotr Indyk and Rajeev Motwani.
\newblock Approximate nearest neighbors: towards removing the curse of
  dimensionality.
\newblock In {\em STOC}, 1998.

\bibitem[KKK16]{rand}
Matti Karppa, Petteri Kaski, and Jukka Kohonen.
\newblock A faster subquadratic algorithm for finding outlier correlations.
\newblock In {\em SODA}, 2016.

\bibitem[KKK{\'O}C16]{det}
Matti Karppa, Petteri Kaski, Jukka Kohonen, and Padraig {\'O}~Cath{\'a}in.
\newblock Explicit correlation amplifiers for finding outlier correlations in
  deterministic subquadratic time.
\newblock In {\em ESA}, 2016.

\bibitem[LG14]{legall}
Fran{\c{c}}ois Le~Gall.
\newblock Powers of tensors and fast matrix multiplication.
\newblock In {\em ISSAC}, 2014.

\bibitem[Lov11]{originaltrick}
Shachar Lovett.
\newblock Computing polynomials with few multiplications.
\newblock {\em Theory of Computing}, 7(1):185--188, 2011.

\bibitem[MDC05]{genome1}
Jonathan Marchini, Peter Donnelly, and Lon~R Cardon.
\newblock Genome-wide strategies for detecting multiple loci that influence
  complex diseases.
\newblock {\em Nature genetics}, 37(4):413, 2005.

\bibitem[PRR95]{lb1}
Ramamohan Paturi, Sanguthevar Rajasekaran, and John Reif.
\newblock The light bulb problem.
\newblock {\em Information and Computation}, 117(2):187--192, 1995.

\bibitem[RVW02]{expander}
Omer Reingold, Salil Vadhan, and Avi Wigderson.
\newblock Entropy waves, the zig-zag graph product, and new constant-degree
  expanders.
\newblock {\em Annals of mathematics}, pages 157--187, 2002.

\bibitem[Str69]{strassen}
Volker Strassen.
\newblock Gaussian elimination is not optimal.
\newblock {\em Numerische mathematik}, 13(4):354--356, 1969.

\bibitem[Val88]{lightbulb}
Leslie~G Valiant.
\newblock Functionality in neural nets.
\newblock In {\em AAAI}, 1988.

\bibitem[Val15]{greg}
Gregory Valiant.
\newblock Finding correlations in subquadratic time, with applications to
  learning parities and the closest pair problem.
\newblock {\em Journal of the ACM}, 62(2):13, 2015.

\bibitem[Wil12]{virginia}
Virginia~Vassilevska Williams.
\newblock Multiplying matrices faster than coppersmith-winograd.
\newblock In {\em STOC}, 2012.

\bibitem[Wil16]{inputrand2}
R~Ryan Williams.
\newblock Natural proofs versus derandomization.
\newblock {\em SIAM Journal on Computing}, 45(2):497--529, 2016.

\bibitem[WYY{\etalchar{+}}10]{genome2}
Xiang Wan, Can Yang, Qiang Yang, Hong Xue, Nelson~LS Tang, and Weichuan Yu.
\newblock Detecting two-locus associations allowing for interactions in
  genome-wide association studies.
\newblock {\em Bioinformatics}, 26(20):2517--2525, 2010.

\end{thebibliography}

\end{document}